\documentclass[11pt]{article}

\usepackage{graphicx}
\usepackage{amsfonts}
\usepackage{amsmath}
\usepackage{amsthm}
\usepackage[mathcal]{euscript}      
\usepackage{bm}                     
\usepackage{color}
\usepackage{epstopdf}
\usepackage{amssymb}
\usepackage{mathrsfs,mathdots}
\usepackage{lscape}
\usepackage[version=3]{mhchem}

\usepackage[paper=a4paper,dvips,top=3cm,left=2.4cm,right=2.4cm,
    foot=1cm,bottom=3cm]{geometry}

\begin{document}

\title{Two Irreducible Functional Bases of Isotropic Invariants of A Fourth Order Three-Dimensional Symmetric and Traceless Tensor}
\author{Zhongming Chen\footnote{%
    Department of Mathematics, School of Science, Hangzhou Dianzi University, Hangzhou 310018, China ({\tt zmchen@hdu.edu.cn}).
    This author's work was supported by the National Natural Science Foundation of China (Grant No. 11701132).}
\and Yannan Chen\footnote{%
    School of Mathematics and Statistics, Zhengzhou University, Zhengzhou 450001, China ({\tt ynchen@zzu.edu.cn}).
    This author was supported by the National Natural Science Foundation of China (Grant No. 11571178, 11771405).}
\and Liqun Qi\footnote{%
    Department of Applied Mathematics, The Hong Kong Polytechnic University,
    Hung Hom, Kowloon, Hong Kong ({\tt maqilq@polyu.edu.hk}).
    This author's work was partially supported by the Hong Kong Research Grant Council
    (Grant No. PolyU  15302114, 15300715, 15301716 and 15300717).}
\and Wennan Zou\footnote{%
    Institute for Advanced Study, Nanchang University, Nanchang 330031, China ({\tt zouwn@ncu.edu.cn}).
    This author was supported by the National Natural Science Foundation of China (Grant No. 11372124)}
    }

\date{\today}
\maketitle

\begin{abstract}    The elasticity tensor is one of the most important fourth order tensors in mechanics.   Fourth order three-dimensional symmetric and traceless tensors play a crucial role in the study of the elasticity tensors.   In this paper, we present two isotropic irreducible functional bases of a fourth order three-dimensional symmetric and traceless tensor.  One of them is the minimal integrity basis introduced by Smith and Bao in 1997.   It has nine homogeneous polynomial invariants of degrees two, three, four, five, six, seven, eight, nine and ten, respectively.  We prove that it is also an irreducible functional basis.   The second irreducible functional basis also has nine homogeneous polynomial invariants.  It has no quartic invariant but has two sextic invariants.   The other seven invariants are the same as those of the Smith-Bao basis.    Hence, the second irreducible functional basis is not contained in any minimal integrity basis.

  \textbf{Key words.} irreducible functional basis, symmetric and traceless tensor, invariant.
\end{abstract}

\newtheorem{Theorem}{Theorem}[section]
\newtheorem{Definition}[Theorem]{Definition}
\newtheorem{Lemma}[Theorem]{Lemma}
\newtheorem{Corollary}[Theorem]{Corollary}
\newtheorem{Proposition}[Theorem]{Proposition}
\newtheorem{Conjecture}[Theorem]{Conjecture}
\newtheorem{Question}[Theorem]{Question}

\renewcommand{\hat}[1]{\widehat{#1}}
\renewcommand{\tilde}[1]{\widetilde{#1}}
\renewcommand{\bar}[1]{\overline{#1}}
\newcommand{\REAL}{\mathbb{R}}
\newcommand{\COMPLEX}{\mathbb{C}}
\newcommand{\SPHERE}{\mathbb{S}^2}
\newcommand{\diff}{\,\mathrm{d}}
\newcommand{\st}{\mathrm{s.t.}}
\newcommand{\T}{\top}
\newcommand{\vt}[1]{{\bf #1}}
\newcommand{\aaa}{{\vt{a}}}
\newcommand{\ddd}{{\vt{d}}}
\newcommand{\x}{{\vt{x}}}
\newcommand{\y}{{\vt{y}}}
\newcommand{\z}{{\vt{z}}}
\newcommand{\uu}{{\vt{u}}}
\newcommand{\vv}{{\vt{v}}}
\newcommand{\ww}{{\vt{w}}}
\newcommand{\e}{{\vt{e}}}
\newcommand{\g}{{\vt{g}}}
\newcommand{\0}{{\vt{0}}}
\newcommand{\Ten}{\bf{T}}
\newcommand{\HH}{\mathbb{H}}
\newcommand{\A}{{\bf A}}
\newcommand{\B}{{\bf B}}
\newcommand{\C}{\mathcal{C}}
\newcommand{\D}{{\bf D}}
\newcommand{\E}{\bf{E}}
\newcommand{\OOO}{\mathcal{O}}
\newcommand{\U}{{\bf{U}}}
\newcommand{\V}{{\bf{V}}}
\newcommand{\W}{{\bf{W}}}
\newcommand{\I}{{\bf{I}}}
\newcommand{\II}{{\mathcal{I}}}
\newcommand{\OO}{{\bf{O}}}
\newcommand{\RESULTANT}{\mathrm{Res}}

\newpage
\section{Introduction}

The elasticity tensor is one of the most important fourth order tensors in mechanics.   There are many papers to study properties of the elasticity tensor.   One of the important theoretical topics for the elasticity tensor in a three dimensional physical space is to study minimal integrity bases and irreducible functional bases of its isotropic invariants.    In 1994, Boehler, Kirillov and Onat \cite{BKO-94} studied
the polynomial basis of anisotropic invariants of the elasticity tensor.   In 2017, Olive, Kolev and Auffray \cite{OKA-17} presented a minimal integrity basis of isotropic invariants of the elasticity tensor, with $297$ invariants.    It is well-known that the number of invariants with the same degree in a minimal integrity basis of some tensors is always fixed \cite{Spe-71}.  It is also possible that the cardinality of an irreducible functional basis of a certain tensor is smaller than the cardinality of a minimal integrity basis of that tensor.  For example, in 2014, Olive and Auffray \cite{OA-14} presented a minimal integrity basis of $13$ isotropic invariants for a third order three-dimensional symmetric tensor.   Recently, Chen, Liu, Qi, Zheng and Zou \cite{CLQZZ-18} presented an irreducible functional basis of that tensor, with $11$ polynomial invariants.    Thus, it is very possible that the elasticity tensor may have a functional basis consisting of polynomial invariants, whose cardinality is smaller than $297$.   If such a functional basis of
the elasticity tensor can be found and its cardinality is significantly smaller than $297$, then it will significant to both theoretical and applied mechanics.

Until now, there are no such a result for the elasticity tensor.  This topic on irreducible functional bases of the elasticity tensor may not be very easy.
On the other hand, the elasticity tensor in a three dimensional physical space has an orthogonal irreducible decomposition \cite{ZZDR-01}, which contains five parts: a fourth order symmetric and traceless tensor, two second order symmetric and traceless tensors, and two scalars. Clearly, scalars are naturally isotropic invariants. Irreducible functional bases of isotropic invariants of second order symmetric and traceless tensors are also well-known \cite{Zh-94}. However, irreducible functional bases of isotropic invariants of a fourth order symmetric and traceless tensor and irreducible functional bases of joint isotropic invariants of a fourth order symmetric and traceless tensor and second order symmetric and traceless tensors are still open. To begin with, we may study irreducible functional bases of a fourth order three-dimensional symmetric and traceless tensor.   
In the study of the elasticity tensor, Boehler, Kirillov and Onat \cite{BKO-94} presented a minimal integrity basis for a fourth order three-dimensional symmetric and traceless tensor with nine homogeneous polynomial invariants.   These nine homogeneous polynomial invariants have degrees two, three, four, five, six, seven, eight, nine and ten, respectively.  In 1997, Smith and Bao \cite{SB-97} presented another minimal integrity basis for a fourth order three-dimensional symmetric and traceless tensor.   The sextic invariant of the Smith and Bao basis is different from that of the Boehler, Kirillov and Onat basis.   The other eight invariants of the Smith and Bao basis are the same as the Boehler, Kirillov and Onat basis.   On the other hand, we may try to find some irreducible functional bases of a fourth order three-dimensional symmetric and traceless tensor.

In this paper, we present two isotropic irreducible functional bases of a fourth order three-dimensional symmetric and traceless tensor.  One of them is exactly the Smith and Bao basis.   We prove that it is also an irreducible functional basis.   The second irreducible functional basis also has nine homogeneous polynomial invariants.  It has no quartic invariant but has two sextic invariants.   One of the two sextic invariants is the same as the one in the Boehler, Kirillov and Onat basis.   Another sextic invariant is the same as the one in the Smith and Bao basis.  The other seven invariants are the same as those of the Smith-Bao basis and the Boehler, Kirillov and Onat basis.    Hence, the second irreducible functional basis is not contained in any minimal integrity basis.
This shows that for a fourth order three-dimensional symmetric and traceless tensor, there is such an irreducible functional basis which consists of polynomial invariants, but not contained in any minimal integrity basis.

In the next section, some preliminary results are given. The Smith and Bao basis is described there.   Then we introduce the other basis described above and show that it is a functional basis in Section 3.   After this we will call this functional basis the mixed functional basis.  We use the divide and conquer approach to deal with these two bases.  The Smith and Bao basis and the mixed functional basis have ten different homogeneous polynomial invariants.  We divide these ten invariants to two groups.   One group consists of four invariants of odd degrees.   The other group consists of invariants of even degrees.  In Section 4, we show that each of these four odd degree invariants is not a function of the other nine invariants.   Then
we show that the Smith and Bao basis is an irreducible functional basis in Section 5, the mixed functional basis is also an irreducible functional basis in Section 6.   The main tactics used in Sections 5 and 6 is to keep the four odd degree invariants zero by restricting five of the nine independent elements of the fourth order three-dimensional symmetric and traceless tensor to zero.   This tactics reduces the size of the systems to be solved.

\section{Preliminaries}

In this paper, we consider the three-dimensional physical space.  The summation convention is used.    If an index is repeated twice in a product, then it means that this product is summed up with respect to this index from $1$ to $3$.


From now on, we use $\A$ to denote a fourth order three-dimensional tensor, and assume that it is represented by $A_{ijkl}$ under an orthonormal basis $\{ \e_i \}$.
  Hence $i, j, k, l \in \{1, 2, 3\}$.   We say that $\A$ is a symmetric tensor if $A_{ijkl}$ is invariant under any permutation of indices.
We say that $\A$ is traceless if
$$A_{iijk} = A_{ijik} = A_{ijki} =  A_{jiik} = A_{jiki} = A_{jkii} = 0 \qquad \forall j,k.$$
We use $\D$ to denote a fourth order three-dimensional symmetric and traceless tensor.
We use $\0$ to denote the zero vector and $\OOO$ to denote the third order three-dimensional zero tensor.

 Consider a fourth order three-dimensional tensor $\A$.  A scalar function of $\A$, $f(\A) = f(A_{ijkl})$ is said to be an isotropic invariant of $\A$ if for any orthogonal matrix $Q_{ii'}$, we have
$$f(A_{ijkl}) = f(A_{i'j'k'l'}Q_{ii'}Q_{jj'}Q_{kk'}Q_{ll'}).$$

   A set of isotropic polynomial invariants $f_1, \dots, f_r$ of $\A$ is said to be an \emph{integrity basis} of $\A$ if any isotropic polynomial invariant is a polynomial of $f_1, \dots, f_r$, and a set of isotropic invariants $f_1, \dots, f_m$ of $\A$ is said to be a \emph{functional basis} of $\A$ if any isotropic invariant is a function of $f_1, \dots, f_m$.  An integrity basis is always a functional basis but not vice versa \cite{BKO-94}.    A set of isotropic polynomial invariants $f_1, \dots, f_r$ of $\A$ is said to be \emph{polynomially irreducible} if none of them is a polynomial of the others. Similarly, a set of isotropic invariants $f_1, \dots, f_m$ of $\A$ is said to be \emph{functionally irreducible} if none of them is a function of the others.   A polynomially irreducible integrity basis of $\A$ is said to be a \emph{minimal integrity basis} of $\A$.  A functionally irreducible functional basis of $\A$ is said to be an \emph{irreducible functional basis} of $\A$.

Consider a fourth order three-dimensional symmetric and traceless tensor $\D$. We use notation: $B_{ij}=D_{ik\ell m}D_{jk\ell m}$, $B^2_{ij}=B_{ik}B_{kj}$ and $C_{ijk\ell}=D_{ijmn}D_{k\ell mn}$.

We observe that there are nine independent elements in $\D$:
\begin{equation*}
  D_{1111}, D_{1112}, D_{1113}, D_{1122}, D_{1123}, D_{1222}, D_{1223}, D_{2222}, \text{ and } D_{2223}.
\end{equation*}

The Smith and Bao minimal integrity basis \cite{SB-97} of a fourth order three-dimensional symmetric and traceless tensor $\D$ is given by the following nine invariants with degrees $2,\dots,10$:
  \begin{equation*}
    \begin{array}{lll}
      J_2:=D_{ijk\ell}D_{ijk\ell},         & J_3:=C_{ijk\ell}D_{ijk\ell},          & J_4:=B_{ij}B_{ij}, \\
      J_5:=B_{ij}D_{ijk\ell}B_{k\ell},     & J_6:=B_{ij}C_{ijk\ell}B_{k\ell},         & J_7:=B^2_{ij}D_{ijk\ell}B_{k\ell}, \\
      J_8:=B^2_{ij}C_{ijk\ell}B_{k\ell},   & J_9:=B^2_{ij}D_{ijk\ell}B^2_{k\ell},  & J_{10}:=B^2_{ij}C_{ijk\ell}B^2_{k\ell}.
    \end{array}
  \end{equation*}

\section{The Mixed Functional Basis}

The Boehler, Kirillov and Onat minimal integrity basis \cite{BKO-94} of a fourth order three-dimensional symmetric and traceless tensor $\D$ is the same as the Smith and Bao minimal integrity basis except $J_6$ is replaced by
$$K_6 = B_{ij} B_{jk} B_{ki}.$$

We have following theorem.

\begin{Theorem} \label{t3.1}
 The mixed basis $\{J_2,J_3,J_5,J_6,K_6,J_7,J_8,J_9,J_{10}\}$ is a functional basis of the fourth order three-dimensional symmetric and traceless tensor $\D$.
\end{Theorem}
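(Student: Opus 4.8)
The plan is to show that the mixed basis $\{J_2,J_3,J_5,J_6,K_6,J_7,J_8,J_9,J_{10}\}$ and the Smith--Bao integrity basis $\{J_2,J_3,J_4,J_5,J_6,J_7,J_8,J_9,J_{10}\}$ generate the same field of functions on the space of tensors $\D$, hence each is a functional basis if the other is. Since Smith and Bao already established that their list is an integrity basis (hence a functional basis), it suffices to prove that $J_4$ can be written as a function of the mixed basis, and conversely that $K_6$ can be written as a function of the Smith--Bao basis. Both reductions should follow from pure linear algebra on the symmetric $3\times 3$ matrix $B=(B_{ij})$, because $J_4$, $K_6$, and the traces of powers of $B$ are all built from $B$ alone.

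The key observation is that $B_{ij}=D_{ik\ell m}D_{jk\ell m}$ is symmetric positive semidefinite, and moreover $\operatorname{tr} B = B_{ii} = D_{ik\ell m}D_{ik\ell m} = J_2$. Writing $p_k = \operatorname{tr}(B^k)$ for the power sums of the eigenvalues of $B$, we have $p_1 = J_2$, $p_2 = B_{ij}B_{ij} = J_4$, and $p_3 = B_{ij}B_{jk}B_{ki} = K_6$. Thus the content of the theorem, as far as the $B$-only invariants are concerned, is simply that $\{p_1,p_2\}$ and $\{p_1,p_3\}$ carry the same information once $p_1$ is known together with the remaining seven invariants. First I would record the Newton identity for a $3\times3$ matrix, $p_3 = e_1 p_2 - e_2 p_1 + 3 e_3$ where $e_1,e_2,e_3$ are the elementary symmetric functions of the eigenvalues, equivalently $p_1^3,\ p_1 p_2,\ p_3$ and $\det B$ are linearly related; but $\det B$ is itself a polynomial (degree three) in $p_1,p_2,p_3$ via $6e_3 = p_1^3 - 3 p_1 p_2 + 2 p_3$. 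So $p_3$ is a polynomial in $p_1,p_2$ together with $e_3=\det B$... which does not immediately close the loop. The cleaner route is: $e_2 = \tfrac12(p_1^2 - p_2)$ is a polynomial in $p_1,p_2=J_2,J_4$, and $e_3$ is not expressible from $p_1,p_2$ alone — so I must instead argue that $\det B$ (or equivalently $p_3 = K_6$) is a function of the full nine-element Smith--Bao basis, not of $\{J_2,J_4\}$ alone.

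Therefore the actual mechanism is: $K_6 = p_3$ and $J_4 = p_2$ are related by $K_6 = \tfrac32 J_2 J_4 - \tfrac12 J_2^3 + 3\det B$, so proving the theorem reduces to expressing $\det B$ as a function of the mixed basis (to get $J_4$ from $K_6$) and as a function of the Smith--Bao basis (to get $K_6$ from $J_4$). Here I would use the Cayley--Hamilton theorem for $B$: $B^3 - p_1 B^2 + e_2 B - e_3 I = 0$. Contracting this matrix identity against a well-chosen tensor — namely against $B_{ij}$, $B^2_{ij}$, or the $C$-tensor combinations appearing in $J_5,\dots,J_{10}$ — yields linear relations among the higher invariants that involve $e_3 = \det B$ linearly with a coefficient that is a polynomial in $J_2$ (for instance contracting $B^3 - p_1 B^2 + e_2 B - e_3 I = 0$ with $B^2_{k\ell} D_{ijk\ell}B_{k\ell}$-type kernels). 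Solving such a relation for $\det B$ expresses it rationally in terms of the other invariants, on the Zariski-dense open set where the relevant polynomial coefficient in $J_2$ is nonzero; a limiting/continuity argument then extends the identity everywhere.

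I expect the main obstacle to be finding the right contraction that isolates $\det B$ with a nonvanishing coefficient, rather than one in which all the $\det B$ terms conspire to cancel. A convenient choice is to use the identity $\det(B)\,\delta_{pq} = $ the adjugate relation, or to note that since $B$ is $3\times 3$, any invariant homogeneous of the correct degree built from $B$ and $\D$ can be reduced via Cayley--Hamilton to a combination of $I, B, B^2$ contracted with $\D,C$; comparing this reduction for $J_7$ and $J_9$ should produce a relation of the form $\alpha(J_2)\det B = $ polynomial in $J_2,J_3,J_5,J_6,J_7,J_8,J_9$, with $\alpha$ a nonzero polynomial. Once $\det B$ is so expressed, the Newton identity gives $J_4$ as a function of $\{J_2,K_6,\text{others}\}$ and $K_6$ as a function of $\{J_2,J_4,\text{others}\}$, and since all the other members of both bases coincide, the two functional-basis claims are equivalent; as Smith and Bao's basis is known to be an integrity basis (a fortiori a functional basis), the mixed basis is a functional basis, which is the assertion of Theorem~\ref{t3.1}.
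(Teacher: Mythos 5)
Your overall reduction is the right one: the theorem comes down to recovering $J_4$ as a function of the nine mixed invariants, and your Newton-identity bookkeeping ($p_1=J_2$, $p_2=J_4$, $p_3=K_6$, $6e_3=p_1^3-3p_1p_2+2p_3$) correctly shows this is equivalent to expressing $\det B$ in terms of the mixed basis. (Note the converse direction you set up --- writing $K_6$ as a function of the Smith--Bao basis --- is automatic, since that basis is already known to be a functional basis and $K_6$ is an isotropic invariant; you do not need to prove it separately.) The genuine gap is that the one nontrivial ingredient, the explicit relation for $\det B$ (equivalently for $K_6$), is never produced: you say you ``expect the main obstacle to be finding the right contraction'' and leave it there. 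Moreover, the mechanism you propose --- contracting the Cayley--Hamilton identity for $B$ against the kernels appearing in $J_7$, $J_9$, etc., to isolate $\det B$ with a coefficient $\alpha(J_2)$ --- points in the wrong direction. The relation that actually exists is a degree-six identity that involves $J_3^2$ and $J_6$, not only trace polynomials of $B$ and the higher odd invariants; it reads
$$K_6 = -\tfrac{13}{80}J_2^3+\tfrac{33}{40}J_2J_4-\tfrac{1}{24}J_3^2+\tfrac{9}{16}J_6,$$
and the clean way to see that such a relation must exist is not Cayley--Hamilton but the integrity-basis property itself: since the Smith--Bao set is an integrity basis, the degree-six invariant $K_6$ must be a polynomial in its elements, and by homogeneity the only candidates are the linear span of $J_2^3$, $J_2J_4$, $J_3^2$, $J_6$; the four coefficients are then fixed by direct computation. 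Once you have this identity you solve for $J_2J_4$ and divide by $J_2$.

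Two smaller points. First, your treatment of the locus where the division fails is too loose: a ``limiting/continuity argument'' is not the right tool for a functional-basis claim (the recovering function need not be continuous, and continuity would not by itself guarantee well-definedness). What saves the day here is the specific fact that $J_2=\|\D\|^2$, so $J_2=0$ forces $\D=0$ and hence $J_4=0$; the case split $J_2\neq 0$ versus $J_2=0$ is all that is needed. Second, your framing via ``the same field of functions generated by the two bases'' is heavier than necessary; the argument only needs the one implication that the ten-element set $\{J_2,J_3,J_4,J_5,J_6,K_6,J_7,J_8,J_9,J_{10}\}$ is a functional basis (trivially, as it contains one) and that $J_4$ is redundant in it.
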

\begin{proof}
Since the Smith and Bao basis $\{J_2,J_3,J_4,J_5,J_6,J_7,J_8,J_9,J_{10}\}$ is an integrity basis of $\D$, it is also a functional basis of $\D$ \cite{BKO-94}.  Thus, by adding an invariant, $K_6$, the ten invariant set $\{J_2,J_3,J_4,J_5,J_6,K_6,J_7,J_8,J_9,J_{10}\}$ is also a functional basis of $\D$.    Since the Smith and Bao basis $\{J_2,J_3,J_4,J_5,J_6,J_7,J_8,J_9,J_{10}\}$ is an integrity basis of $\D$, $K_6$ should be a linear combination of $J_2^3$, $J_2J_4$, $J_3^2$ and $J_6$.    We have
$$K_6 = -\frac{13}{80} J_2^3 + \frac{33}{40} J_2 J_4 - \frac{1}{24} J_3^2 + \frac{9}{16} J_6.$$
Then
$$J_2J_4 = \frac{39 J_2^3 + 10J_3^2 -135 J_6 +240 K_6}{198}.$$
If $J_2 \not = 0$, we have
$$J_4 = \frac{39 J_2^3 + 10J_3^2 -135 J_6 +240 K_6}{198 J_2}.$$
If $J_2 = 0$, then $\D = 0$ and we have $J_4 = 0$.  Hence, $J_4$ is a function of $J_2, J_3, J_6$ and $K_6$.   Therefore, the nine invariant set $\{J_2,J_3,J_5,J_6,K_6,J_7,J_8,J_9,J_{10}\}$ is a functional basis of $\D$.
\end{proof}

From now on, we call this nine invariant set $\{J_2,J_3,J_5,J_6,K_6,J_7,J_8,J_9,J_{10}\}$ the mixed functional basis of $\D$.   Since it has no quartic invariant and has two sextic invariants, it is not contained in any minimal integrity basis of $\D$.

\section{The Odd Degree Invariants in These Two Bases}

In the Smith and Bao basis and the mixed functional basis, there are four odd degree invariants $J_3, J_5, J_7$ and $J_9$, and six even degree invariants $J_2, J_4, J_6, K_6, J_8$ and $J_{10}$. A notable property is that when $\D$ changes its sign, the odd degree invariants change their signs, while the even degree invariants are unchanged.   Using this property, it is relatively easy to show that each of $J_3, J_5, J_7$ and $J_9$ is not a function of the other three odd degree invariants and the six even degree invariants $J_2, J_4, J_6, K_6, J_8$ and $J_{10}$.

\begin{Proposition} \label{p4.1}  We have the following four conclusions.

(a) If there is a fourth order three-dimensional symmetric and traceless tensor $\D$ such that $J_5 = J_7 = J_9 = 0$ but $J_3 \not = 0$, then $J_3$ is not a function of $J_2, J_4, J_5, J_6, K_6, J_7, J_8, J_9$ and $J_{10}$.

(b) If there is a fourth order three-dimensional symmetric and traceless tensor $\D$ such that $J_3 = J_7 = J_9 = 0$ but $J_5 \not = 0$, then $J_5$ is not a function of $J_2, J_3, J_4, J_6, K_6, J_7, J_8, J_9$ and $J_{10}$.

(c) If there is a fourth order three-dimensional symmetric and traceless tensor $\D$ such that $J_3 = J_5 = J_9 = 0$ but $J_7 \not = 0$, then $J_7$ is not a function of $J_2, J_3, J_4, J_5, J_6, K_6, J_8, J_9$ and $J_{10}$.

(d) If there is a fourth order three-dimensional symmetric and traceless tensor $\D$ such that $J_3 = J_5 = J_7 = 0$ but $J_9 \not = 0$, then $J_9$ is not a function of $J_2, J_3, J_4, J_5, J_6, K_6, J_7, J_8$ and $J_{10}$.
 \end{Proposition}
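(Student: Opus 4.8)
The plan is to exploit the parity observation already highlighted in the text: replacing $\D$ by $-\D$ multiplies every entry $D_{ijkl}$ by $-1$, hence multiplies any invariant that is homogeneous of degree $d$ in the entries of $\D$ by $(-1)^d$. Since $B_{ij}=D_{ik\ell m}D_{jk\ell m}$ and $C_{ijk\ell}=D_{ijmn}D_{k\ell mn}$ are quadratic in $\D$, one checks that $J_2,J_4,J_6,K_6,J_8,J_{10}$ have even degrees ($2,4,6,6,8,10$) while $J_3,J_5,J_7,J_9$ have odd degrees ($3,5,7,9$). Thus, under the substitution $\D\mapsto-\D$, the six even-degree invariants are unchanged and the four odd-degree invariants change sign. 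All four parts of the proposition then follow from the same argument; I will write out part (a), the others being identical after permuting the roles of $J_3,J_5,J_7,J_9$.

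For part (a) I would argue by contradiction. Suppose $J_3=g(J_2,J_4,J_5,J_6,K_6,J_7,J_8,J_9,J_{10})$ for some real-valued function $g$ of nine arguments, valid for every fourth order three-dimensional symmetric and traceless tensor. Let $\D$ be a tensor as in the hypothesis, so that $J_5(\D)=J_7(\D)=J_9(\D)=0$ while $J_3(\D)\neq0$. Evaluate the assumed identity at $\D$ and at $-\D$. The three odd-degree arguments among the nine, namely $J_5,J_7,J_9$, vanish at $\D$ by hypothesis, hence also vanish at $-\D$ (they merely change sign); the six even-degree arguments $J_2,J_4,J_6,K_6,J_8,J_{10}$ take the same value at $\D$ and at $-\D$. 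Therefore the nine-tuple fed into $g$ is literally the same for $\D$ and for $-\D$, so $g$ returns the same value, giving $J_3(\D)=J_3(-\D)$. But $J_3(-\D)=-J_3(\D)$, so $J_3(\D)=0$, contradicting $J_3(\D)\neq0$. Hence no such $g$ exists, i.e.\ $J_3$ is not a function of the remaining nine invariants.

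Parts (b), (c) and (d) are obtained by the identical computation: in each case one uses the tensor $\D$ supplied by the corresponding hypothesis, and the point is again that the three odd-degree invariants appearing among the nine ``other'' invariants are exactly the ones the hypothesis sets to zero, so that the full list of arguments of the putative function is unchanged under $\D\mapsto-\D$, while the invariant in question flips sign and is therefore forced to be zero.

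There is essentially no analytical obstacle within this proposition: its content is purely the bookkeeping of degrees together with the trivial fact that a function cannot separate two inputs on which all of its arguments coincide. The only point requiring care is to state the degree of each of the ten invariants correctly so that the even/odd split is beyond doubt. The genuinely substantive work lies elsewhere — namely, exhibiting explicit tensors $\D$ realizing each of the four hypotheses (so the conclusions are not vacuous), and, later, handling the even-degree invariants, for which the sign trick is unavailable; that is what occupies Sections 5 and 6 rather than this proposition.
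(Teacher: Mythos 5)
Your proposal is correct and is essentially the paper's own argument: evaluate the putative functional relation at $\D$ and $-\D$, note that the six even-degree invariants are unchanged while the three vanishing odd-degree invariants remain zero, and conclude that $J_3(\D)=-J_3(\D)$, a contradiction. The paper states this more tersely but the idea is identical.
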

\begin{proof}
We now prove conclusion (a).  If there is a fourth order three-dimensional symmetric and traceless tensor $\D$ such that $J_5 = J_7 = J_9 = 0$ but $J_3 \not = 0$, then we may consider $-\D$. $J_2, J_4, J_5, J_6, K_6$, $J_7, J_8, J_9$ and $J_{10}$ are unchanged,  but $J_3$ changes its sign.   This implies that $J_3$ is not a function of $J_2, J_4, J_5, J_6, K_6$, $J_7, J_8, J_9$ and $J_{10}$.    The other three conclusions (b), (c) and (d) can be proved similarly.
\end{proof}

We now have the following theorem.

\begin{Theorem} \label{t4.2}
Each of $J_3, J_5, J_7$ and $J_9$ is not a function of the other three odd degree invariants and the six even degree invariants $J_2, J_4, J_6, K_6, J_8$ and $J_{10}$.
\end{Theorem}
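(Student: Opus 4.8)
The plan is to deduce Theorem~\ref{t4.2} from Proposition~\ref{p4.1}. By that proposition it is enough to produce four fourth order three-dimensional symmetric and traceless tensors $\D^{(3)},\D^{(5)},\D^{(7)},\D^{(9)}$ such that for $\D^{(k)}$ the three odd degree invariants among $J_3,J_5,J_7,J_9$ other than $J_k$ vanish while $J_k\neq 0$. Indeed, applying Proposition~\ref{p4.1}(a) to $\D^{(3)}$ shows that $J_3$ is not a function of the remaining nine invariants, and $\D^{(5)},\D^{(7)},\D^{(9)}$ do the same through parts (b), (c), (d); the four conclusions together are precisely the assertion of Theorem~\ref{t4.2}.

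For $\D^{(3)}$ I would take a tensor of cubic symmetry, for instance the traceless part of $x_1^4+x_2^4+x_3^4$, that is, $D_{1111}=D_{2222}=D_{3333}=\frac25$, $D_{iijj}=-\frac15$ for $i\neq j$, and all remaining independent components zero. Since $B_{ij}=D_{iklm}D_{jklm}$ is built isotropically from $\D^{(3)}$ and $\D^{(3)}$ is invariant under the octahedral group, which fixes no nonzero symmetric traceless second order tensor, we get $B_{ij}=\mu\,\delta_{ij}$ with $\mu=\frac13 J_2=\frac25$, whence $B^2_{ij}=\mu^2\delta_{ij}$ and $J_5=\mu^2 D_{iikk}$, $J_7=\mu^3 D_{iikk}$, $J_9=\mu^4 D_{iikk}$, all of which vanish because $D_{iikk}=0$ by tracelessness; a direct computation gives $J_3=\frac{6}{25}\neq 0$. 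So $\D^{(3)}$ settles part (a).

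For parts (b), (c), (d) I would use a slicing device: restrict $\D$ to a low dimensional family by setting several of its nine independent components to zero (chosen so that $\D$ keeps a reflection symmetry and $B$ becomes block diagonal), so that the contractions $B_{ij}=D_{iklm}D_{jklm}$ and $C_{ijkl}=D_{ijmn}D_{klmn}$ collapse to explicit algebra and $J_3,J_5,J_7,J_9$ become explicit polynomials in the few remaining parameters. One then exhibits parameter values realizing the required pattern in each case, namely $J_3=J_7=J_9=0$ with $J_5\neq 0$ for (b), and the two analogous systems for (c) and (d). Since each $\D^{(k)}$ is written down explicitly, verifying its invariant values is a finite computation, doable by hand or with a computer algebra system.

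The step I expect to be the main obstacle is the choice of slice for (b), (c), (d). Over-symmetric choices collapse too much: if $B$ is isotropic, or has a zero eigenvalue, then $J_5=B_{ij}D_{ijkl}B_{kl}$ is forced to vanish, and $J_7,J_9$ usually with it, so the slice must be asymmetric enough that $B$ has three distinct eigenvalues; yet it must remain structured enough that the system $J_3=J_7=J_9=0$ and its analogues can be solved in closed form with the surviving odd invariant nonzero. A naive three parameter (fully diagonal) slice is in fact too small, since there $J_3$, $J_7$, $J_9$ are homogeneous of degrees $3$, $7$, $9$ in three variables and generically have no common zero. Finding a slightly larger slice that strikes the right balance, and then carrying out the bookkeeping for $B$ and $C$ on it, is where the real work of the proof lies.
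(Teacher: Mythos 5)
Your overall strategy coincides with the paper's: Theorem~\ref{t4.2} is deduced from Proposition~\ref{p4.1} by exhibiting, for each of $J_3,J_5,J_7,J_9$, an explicit tensor on which that invariant is nonzero while the other three odd-degree invariants vanish. Your treatment of case (a) is complete and in fact more elegant than the paper's: for the cubic-symmetric tensor the octahedral invariance forces $B_{ij}=\mu\delta_{ij}$, so $J_5,J_7,J_9$ all reduce to multiples of $D_{iikk}=0$ with no computation, and your values $\mu=\tfrac{2}{5}$ and $J_3=\tfrac{6}{25}$ check out. (The paper instead uses the ad hoc tensor $D_{1111}=8$, $D_{1122}=-4$, $D_{1222}=D_{1223}=5$, $D_{2222}=3$ with $J_3=-6480$.)

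The genuine gap is that cases (b), (c) and (d) are not actually proved: you describe a slicing heuristic and state that ``one then exhibits parameter values realizing the required pattern,'' but you never produce the three tensors, and you yourself identify finding them as ``where the real work of the proof lies.'' Since the entire content of the theorem is the existence of these examples, a proof cannot defer them. Nor is their existence routine: the witnesses the paper writes down involve entries such as $\sqrt{\tfrac{5}{2}+\sqrt{\tfrac{38}{3}}}$ for case (b) and $\tfrac{1}{144}\bigl(-83+\sqrt{13033}\bigr)$ for case (c), i.e.\ they arise as solutions of nontrivial polynomial systems on carefully chosen slices, and it is not evident in advance that the systems $J_3=J_7=J_9=0$ with $J_5\neq 0$ (and the two analogues) admit real solutions at all. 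Note also that the zero slice $D_{1111}=D_{1112}=D_{1122}=D_{1222}=D_{2222}=0$ used elsewhere in the paper kills all four odd invariants simultaneously, so it is useless here; each case needs its own slice and its own solution. Until those three tensors are written down and their invariant values verified, the argument establishes only one quarter of the theorem.
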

\begin{proof}
To show that each of $J_3, J_5, J_7$ and $J_9$ is not a function of the other nine invariants, by Proposition \ref{p4.1}, we need to find examples such that one of $J_3, J_5, J_7$ and $J_9$ is not equal to zero, and the other three odd degree invariants vanish.

Let
$$
D_{1111}=8,\   D_{1112}=0, \   D_{1113}=0, \   D_{1122}=-4, \   D_{1123}=0,
$$
$$
D_{1222}=5,  \   D_{1223}=5, \   D_{2222}=3  \text{ \  and \ }  D_{2223}=0.
$$
Then we have $J_5 =J_7=J_9= 0$, $J_3 = -6480$. By part (a) of Proposition \ref{p4.1}, $J_3$ is not a function of $J_2, J_4, J_5, J_6, K_6, J_7, J_8, J_9$ and $J_{10}$.

Let
$$
D_{1111}=1,\   D_{1112}=\sqrt{\frac{5}{2} + \sqrt{\frac{38}{3}}}, \   D_{1113}=-\frac{\sqrt{2}}{2}, \   D_{1122}=-1, \   D_{1123}=0,
$$
$$
D_{1222}=-\sqrt{\frac{5}{2} + \sqrt{\frac{38}{3}}},  \   D_{1223}=0, \   D_{2222}=\frac{1}{2}  \text{ \  and \ }  D_{2223}=\frac{1}{2} \sqrt{5 + 2 \sqrt{\frac{38}{3}}}.
$$
Then we have $J_3 =J_7=J_9= 0$, $J_5 = \frac{25}{2}$. By part (b) of Proposition \ref{p4.1}, $J_5$ is not a function of $J_2, J_3, J_4, J_6, K_6, J_7, J_8, J_9$ and $J_{10}$.

Let
$$
D_{1111}=\frac{1}{144} \left(-83 + \sqrt{13033}\right), \   D_{1112}=0, \   D_{1113}=\frac{1}{72} \sqrt{\frac{1}{2} \left(-11455 + 101 \sqrt{13033}\right)},
$$
$$
D_{1122}=0, \   D_{1123}=0,  \ D_{1222}=\frac{1}{2 \sqrt{2}},  \   D_{1223}=0, \   D_{2222}=\frac{1}{2}  \text{ \  and \ }  D_{2223}=0.
$$
Then we have $J_3 =J_5=J_9= 0$, $J_7 = \frac{6384263 - 55933 \sqrt{13033}}{884736} \thickapprox -0.00132174$. By part (c) of Proposition \ref{p4.1}, $J_7$ is not a function of $J_2, J_3, J_4, J_5, J_6, K_6, J_8, J_9$ and $J_{10}$.

Let
$$
D_{1111}=0,\   D_{1112}=1, \   D_{1113}=0, \   D_{1122}=0, \   D_{1123}=0,
$$
$$
D_{1222}=-\frac{3}{4},  \   D_{1223}=\frac{1}{4}, \   D_{2222}=1  \text{  and }  D_{2223}=0.
$$
Then we have $J_3 =J_5=J_7= 0$, $J_9 = \frac{45}{8}$. By part (d) of Proposition \ref{p4.1}, $J_9$ is not a function of $J_2, J_3, J_4, J_5, J_6, K_6, J_7, J_8$ and $J_{10}$.

\end{proof}

\section{The Smith and Bao Basis is an Irreducible Functional Basis}

 By \cite{BKO-94}, any integrity basis is a functional basis.    Hence, the Smith and Bao basis is a functional basis of $\D$.  To show that the Smith and Bao basis is an irreducible function
 basis of $\D$, we only need to show that each of its nine invariants is not a function of the other eight invariants \cite{PT-87}.

We now present one of the main theorems of this paper.  In the proof of this theorem, we need to show that each of the even degree invariants $J_2, J_4, J_6, J_8$ and $J_{10}$ is not a function of the other eight invariants.   For $J_2$ and $J_4$, we use some simple examples.  These simple examples do not follow the tactics described below.  For $J_{10}$, $J_8$ and $J_6$, we could not find such simple examples.  Then we use the tactics mentioned at the end of the introduction.   We observe that
$$J_3(\D)=J_5(\D)=J_7(\D)=J_9(\D)=0$$
if
$D_{1111}=D_{1112}= D_{1122}=D_{1222}=D_{2222}=0$.
Then we restrict these five independent elements of $\D$ to be zero.   This reduces the size of the systems to be solved.  For $J_{10}$ and $J_8$, we do not give the systems to be solved explicitly, as the solutions of these systems are not too complicated.   For $J_6$, we present the system to be solved explicitly, as the solution of the system is somewhat complicated.

\begin{Theorem} \label{t5.1}
 The Smith and Bao basis $\{J_2,J_3,J_4,J_5,J_6,J_7,J_8,J_9,J_{10}\}$ is an irreducible functional basis of the fourth order three-dimensional symmetric and traceless tensor $\D$.
\end{Theorem}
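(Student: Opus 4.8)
The plan is to verify, invariant by invariant, that deleting any one of the nine invariants from the Smith and Bao basis destroys the functional basis property; by the criterion of \cite{PT-87} this suffices. Theorem \ref{t4.2} already disposes of the four odd degree invariants $J_3, J_5, J_7, J_9$, since each of them fails to be a function of the other eight. So the remaining work is to treat the five even degree invariants $J_2, J_4, J_6, J_8, J_{10}$: for each one I must exhibit two tensors $\D$ and $\D'$ that agree on the other eight invariants but differ on the one in question. For $J_2$ and $J_4$, which are low degree, I expect small explicit examples to work directly — e.g. a tensor built from a single second order ingredient where $J_4$ can be scaled independently of $J_2$, or vice versa — and these do not need the reduction described below.

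For $J_6$, $J_8$, and $J_{10}$ the examples are harder to find by inspection, so the tactic is to shrink the search space. I would set $D_{1111}=D_{1112}=D_{1122}=D_{1222}=D_{2222}=0$, leaving only the four parameters $D_{1113}, D_{1123}, D_{1223}, D_{2223}$ (the traceless constraints having already been used to determine the remaining components). As noted in the excerpt, on this subvariety all four odd degree invariants vanish identically, $J_3=J_5=J_7=J_9=0$, so any two tensors in this family automatically agree on those four. It then remains to arrange that two such tensors agree on four of the five even invariants while disagreeing on the fifth. Concretely, to show $J_{10}$ is not a function of the rest, I would solve the system $J_2=J_2^*, J_4=J_4^*, J_6=J_6^*, J_8=J_8^*$ (prescribed values coming from some base tensor) in the four unknowns and look for a second solution with a different value of $J_{10}$; the same scheme with the roles permuted handles $J_8$ and $J_6$. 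Because the system has four equations in four unknowns, one generically expects finitely many solutions, and a multiplicity greater than one — or a pair of real branches — gives exactly the needed witness pair.

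The main obstacle is the $J_6$ case: the excerpt explicitly warns that here the authors must write the polynomial system out in full because its solution is "somewhat complicated," which signals that the naive four-equations-in-four-unknowns count either degenerates (the prescribed values forcing a positive-dimensional solution set, or forcing all solutions to share a $J_6$ value) or produces an algebraically messy resultant. I would handle this by being flexible about which four even invariants are held fixed — possibly fixing $J_2, J_4, J_8, J_{10}$ and varying $J_6$, or exploiting a one-parameter scaling symmetry of the reduced four-parameter family to reduce to a genuinely lower dimensional computation — and by allowing the witness tensors to have irrational entries (as the examples in Theorem \ref{t4.2} already do). Once suitable numerical or closed-form pairs are located for $J_6, J_8, J_{10}$, and the simple examples for $J_2, J_4$ are checked, together with Theorem \ref{t4.2}, all nine invariants are shown to be functionally independent of the others, and the theorem follows.
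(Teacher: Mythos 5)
Your overall strategy coincides with the paper's: invoke Theorem \ref{t4.2} for the four odd-degree invariants, use small explicit tensors for $J_2$ and $J_4$, and for $J_6$, $J_8$, $J_{10}$ restrict to the subvariety $D_{1111}=D_{1112}=D_{1122}=D_{1222}=D_{2222}=0$ (on which $J_3=J_5=J_7=J_9=0$ automatically) and hunt for pairs of tensors agreeing on four even invariants but not the fifth. This is exactly the divide-and-conquer tactic the paper follows.

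The gap is that your proposal stops where the actual mathematical work begins: no witness pair is ever exhibited, and the genericity argument you offer in its place does not close the deal. Saying that four equations in four unknowns ``generically'' have finitely many solutions, and that a second real branch ``gives exactly the needed witness pair,'' establishes neither that a second real solution exists nor that the omitted invariant actually takes a different value on it --- the second solution could coincide with the first, be complex, or accidentally share the same value of $J_6$ (or $J_8$, $J_{10}$). The paper has to do genuine work here: for $J_{10}$ it finds an explicit tensor where flipping a sign in $D_{1223}$ changes only $J_{10}$; for $J_8$ it constructs a one-parameter family $\D(t)$, $\hat{\D}(t)$ and applies the intermediate value theorem to a sextic $h(t)$ to locate $t^*\in(0.15,0.2)$ where all invariants except $J_8$ agree; for $J_6$ it produces an exact solution whose parameter $s$ is a root of a degree-$13$ polynomial in a specified interval, with coefficients occupying half a page. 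Likewise, your claim that ``small explicit examples'' will handle $J_2$ and $J_4$ is plausible but unproven until you write them down (the paper uses $\D_1$, $\D_2$, $\D_3$ each with a single nonzero independent component). Until these concrete computations are supplied and verified, the theorem is not proved; the existence of the witness pairs is the entire content of the even-degree half of the argument, not a routine afterthought.
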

\begin{proof}

By Theorem \ref{t4.2}, we know that each of $J_3, J_5, J_7$ and $J_9$ is not a function of the other eight invariants.   Hence, it suffices to prove that each of $J_2, J_4, J_6, J_8$ and $J_{10}$ is not a function of the other eight invariants.

We consider three tensors, whose independent elements and associated values of invariants
are listed as
$$
\begin{tabular}{r|ccccccccc}
$(D_{1111}, \ldots, D_{2223})$            & $J_2$ & $J_3$ & $J_4$ &  $J_5$  & $J_6$  & $J_7$  & $J_8$  & $J_9$  & $J_{10}$ \\ \hline
$\D_1 :(1, 0, 0, 0, 0, 0, 0, 0, 0)$        & 8 & 0 &32 &0 &0 &0 &0 &0 &0    \\
$\D_2 :(0,\frac{\sqrt{2}}{\sqrt[4]{11}}, 0, 0, 0, 0, 0, 0, 0)$ & $\frac{32}{\sqrt{11}}$ &0 &32 &0 &0 &0 &0 &0 &0   \\
$\D_3 :(0,\frac{1}{\sqrt{2}}, 0, 0, 0, 0, 0, 0, 0)$    & 8 &0 &22 &0 &0 &0 &0 &0 &0
\end{tabular}
$$
In the table, $(D_{1111}, \ldots, D_{2223})$ means
\begin{equation*}
  D_{1111}, D_{1112}, D_{1113}, D_{1122}, D_{1123}, D_{1222}, D_{1223}, D_{2222}, \text{ and } D_{2223}.
\end{equation*}
We now prove that the invariant $J_2$ is not a function of other invariants by contradiction [10].
Suppose that $J_2$ is a function of $J_3, \ldots , J_{10}$. Once the values of $J_3, \ldots , J_{10}$ are fixed, the value of
$J_2$ is determined. Now, we consider tensors $\D_1$ and $\D_2$. Corresponding values of $J_3, \ldots , J_{10}$ in $\D_1$
and $\D_2$ are equal. But values $J_2(\D_1)$ and $J_2(\D_2)$ are different. This generates the contradiction.
Hence, the invariant $J_2$ must be included in any irreducible functional basis contained in the Smith-Bao integrity basis $\{J_2, \ldots, J_{10}\}$.
Similarly, we can prove that the invariant $J_4$ must be included in any irreducible function
basis contained in the Smith-Bao integrity basis $\{J_2, \ldots, J_{10}\}$, by using tensors $\D_1$ and $\D_3$.

Let $D_{1111}=D_{1112}= D_{1122}=D_{1222}=D_{2222}=0$,
$$
D_{1113}=\frac{1}{9}, D_{1123}=-\frac{1}{9 \sqrt{5}},
D_{1223}=\frac{1}{16} \left(-\frac{4}{9} + \frac{28}{\sqrt{5}}\right), \   \text{  and }  D_{2223}=\frac{\sqrt{5}}{18}.
$$
Then we have
$$ J_3 =J_5=J_7=J_9=0, \ J_2=10, \  J_4=\frac{1553}{45}, $$
$$ J_6=\frac{98}{135}, \   J_8=\frac{207319}{40500}, \  J_{10}=\frac{343 (512675 + 216 \sqrt{5})}{4860000}. $$
Now let $D_{1223}=-\frac{1}{16} \left(\frac{4}{9} + \frac{28}{\sqrt{5}}\right)$ and the values of other  independent elements remain unchanged.
We find that $J_{10} = \frac{343 (512675 - 216 \sqrt{5})}{4860000}$ while the other invariants do not change. This shows that $J_{10}$ is not a function of $J_2, J_3, J_4, J_5, J_6, J_7, J_8$ and $J_{9}$.

Let $\D=(D_{ijkl})$ be a fourth order three-dimensional symmetric and traceless tensor given by $D_{1111}=D_{1112}= D_{1122}=D_{1222}=D_{2222}=0$,
$
D_{1113}=1, \  D_{1123}=-\sqrt{t},
$
$$
D_{1223}=\frac{1}{-4+8t} \left(1-2t + \sqrt{6+9t-54t^2+24t^3} \right),  \text{  and }  D_{2223}=\frac{1+5t}{4\sqrt{t}},
$$
and let $\hat{\D}=(\hat{D}_{ijkl})$ be the same tensor with $\D$ except the element
$$
\hat{D}_{1223}=\frac{1}{-4+8t} \left(1-2t - \sqrt{6+9t-54t^2+24t^3} \right),
$$
where $t \in (0, \frac{1}{2} )$ is a parameter to be determined. By simple computation, we have
$$
J_3(\D) = J_5(\D) = J_7(\D) = J_9(\D) = 0,
$$
$$
J_3(\hat{\D}) =J_5(\hat{\D})= J_7(\hat{\D})=J_9(\hat{\D})=0,
$$
$$
J_2(\D) = J_2(\hat{\D}), \quad   J_4(\D) = J_4(\hat{\D}), \quad J_6(\D) = J_6(\hat{\D}).
$$
On the other hand, we also find that $J_8(\D) = J_8(\hat{\D})$ if and only if
$ t =0.2 $,
and $J_{10}(\D) = J_{10}(\hat{\D})$ if and only if
$$
(1 - 5 t)^2  (-4 - 156 t - 207 t^2 + 5863 t^3 + 6234 t^4 - 24147 t^5 +  9800 t^6) = 0 .
$$
Denote by $h(t)=-4 - 156 t - 207 t^2 + 5863 t^3 + 6234 t^4 - 24147 t^5 +  9800 t^6$. It is easy to check that
$h(0.15)=-10.8359$ and $h(0.2)=6.29856$. Since $h(t)$ is continuous, there exists $t^* \in (0.15, 0.2)$ such that $h(t^*)=0$.
This means that when $t=t^*$, the two tensors $\D$ and $\hat{\D}$ have the same values of $J_2, J_3, J_4, J_5, J_6, J_7, J_9, J_{10}$ except $J_8$.
Hence, $J_8$ is not a function of $J_2, J_3, J_4, J_5, J_6, J_7, J_9, J_{10}$.

Now we turn to show that $J_6$ should be included in the irreducible functional basis. As said before, we observe that
$$J_3(\D)=J_5(\D)=J_7(\D)=J_9(\D)=0$$
if
$D_{1111}=D_{1112}= D_{1122}=D_{1222}=D_{2222}=0$.
In this case, it suffices to find a solution of the homogenous system of equations
$$ \left\{
\begin{array}{rcl}
J_2(D_{1123}, D_{1113}, D_{1223}, D_{2223}) &=&J_2(D_{1123}, D_{1113}, \hat{D}_{1223}, D_{2223}) \\
J_4(D_{1123}, D_{1113}, D_{1223}, D_{2223}) &=&J_4(D_{1123}, D_{1113}, \hat{D}_{1223}, D_{2223}) \\
J_8(D_{1123}, D_{1113}, D_{1223}, D_{2223}) &=&J_8(D_{1123}, D_{1113}, \hat{D}_{1223}, D_{2223}) \\
J_{10}(D_{1123}, D_{1113}, D_{1223}, D_{2223}) &=&J_{10}(D_{1123}, D_{1113}, \hat{D}_{1223}, D_{2223})
\end{array}
\right.
$$
such that $J_6(D_{1123}, D_{1113}, D_{1223}, D_{2223}) \neq J_6(D_{1123}, D_{1113}, \hat{D}_{1223}, D_{2223})$.
Without loss of generality, we set $D_{1123}=1$. By using Wolfram Mathematica 11, we get an exact solution with
$$
D_{1123}=1, \quad D_{1113} =-\sqrt{s}, $$
$$D_{1223}= -\frac{1}{4} + {\sqrt{w} \over v}, \quad
D_{2223}= -\frac{\sqrt{su}}{129040}, \quad
\hat{D}_{1223}=-\frac{1}{4} - {\sqrt{w} \over v},$$
where
$$u = 3409988 + 121090461 s + 3047771035 s^2 + 36002697784 s^3 + 266820285024 s^4 $$ $$ - 258331866402 s^5 - 16053102565014 s^6 -
     57863015604468 s^7 + 67272897295672 s^8 + 488515509520597 s^9 $$ $$ +  183998668490763 s^{10} - 580058593972760 s^{11} + 125410376560000 s^{12},$$
$$w = 162428354809686870437242477762501577487020295460  $$ $$ +          8628548774419344838383115813832547548275782418908 s $$ $$ + 125423295516805616467490217344773033930710687268713 s^2 $$ $$ +          1126867247954259151777196475559771279188654539984224 s^3 $$ $$ + 2533420089208640946542261362251593116475202759287712 s^4 $$ $$ -           46697241050092754413667817534873314111694780885869352 s^5 $$ $$ - 320813241173678850651381840629259210283076735948127042 s^6 $$ $$ -           332968212886054159982148032334380094110168127761229720 s^7 $$ $$ + 2078991876053044927402466766075157597696934830709243244 s^8 $$ $$ +           4819019069384851940098815445425281271098298476930610988 s^9 $$ $$ - 259100661739478059764070674111215079478898258353817655 s^{10} $$ $$ -           4566644851908588078748035381874622354177560283807382600 s^{11} $$ $$ + 1072334324216249440779599886720797670570605466955600000 s^{12}$$ $$ v = {909792 \sqrt{277963020342792405455068513935695065}},
$$
$s \in (0.15, 0.18)$ is the unique solution of the polynomial equation
$ 32260+3571288 x+121090461 x^2+3047771035 x^3+36002697784 x^4+266820285024 x^5-258331866402 x^6-16053102565014 x^7-57863015604468 x^8+67272897295672 x^9+488515509520597 x^{10}+183998668490763 x^{11}-580058593972760 x^{12}+125410376560000 x^{13}=0 $.
Except that $D_{1123}=1$, approximate digit values of $D_{1113}, D_{1223}, D_{2223}, \hat{D}_{1223}$ are as follows:
$$
D_{1113}=-0.406303, \  D_{1223}=0.672665, \  D_{2223}=1.12318  \text{  and }  \hat{D}_{1223}=1.17267.
$$
It is easy to check that $J_6(D_{1123}, D_{1113}, D_{1223}, D_{2223}) \neq J_6(D_{1123}, D_{1113}, \hat{D}_{1223}, D_{2223})$. Thus, $J_6$ is not a function of $J_2, J_3, J_4, J_5, J_7, J_8, J_9, J_{10}$.

Combining these results, we conclude that the Smith and Bao basis $$\{J_2,J_3,J_4,J_5,J_6,J_7,J_8,J_9,J_{10}\}$$
  is indeed an irreducible functional basis of $\D$.
\end{proof}

\section{The Mixed Function Basis is also an Irreducible Functional Basis}

We now consider the mixed functional basis $\{J_2,J_3,J_5,J_6,K_6,J_7,J_8,J_9,J_{10}\}$.  By Theorem \ref{t4.2}, each of $J_3, J_5, J_7$ and $J_9$ is not a function of the other eight invariants.   We may prove the following result.

\begin{Lemma} \label{l6.1}
For the mixed functional basis $\{J_2,J_3,J_5,J_6,K_6,J_7,J_8,J_9,J_{10}\}$, each of $J_2, K_6, J_8$ and $J_{10}$ is not a function of the other eight invariants.
\end{Lemma}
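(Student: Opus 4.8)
The plan is to follow the same template as the proof of Theorem~\ref{t5.1}, exploiting one structural fact established in the proof of Theorem~\ref{t3.1}: $K_6$ is a polynomial in $J_2,J_3,J_4,J_6$, explicitly $K_6 = -\frac{13}{80}J_2^3 + \frac{33}{40}J_2J_4 - \frac{1}{24}J_3^2 + \frac{9}{16}J_6$. Consequently, whenever two tensors agree on $J_2,J_3,J_4,J_6$ they automatically agree on $K_6$ as well. The odd-degree members $J_3,J_5,J_7,J_9$ of the mixed basis are already covered by Theorem~\ref{t4.2}, so it remains only to separate each of $J_2,K_6,J_8,J_{10}$ from the other eight, that is, for each such $f$ to exhibit a pair of tensors $\D,\hat{\D}$ agreeing on all mixed-basis invariants except $f$.

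For $J_8$ and $J_{10}$ I would reuse the examples constructed in the proof of Theorem~\ref{t5.1}. For $J_{10}$, the two tensors with $D_{1111}=D_{1112}=D_{1122}=D_{1222}=D_{2222}=0$, $D_{1113}=\frac{1}{9}$, $D_{1123}=-\frac{1}{9\sqrt{5}}$, $D_{2223}=\frac{\sqrt{5}}{18}$, and $D_{1223}$ equal to either of the two values displayed there, already agree on $J_2,J_3,J_4,J_5,J_6,J_7,J_8,J_9$ but differ on $J_{10}$; by the structural fact they also agree on $K_6$, so this pair separates $J_{10}$ within the mixed basis. For $J_8$, reuse the family $\D(t),\hat{\D}(t)$ of Theorem~\ref{t5.1} at the value $t^\ast\in(0.15,0.2)$ with $h(t^\ast)=0$: there $J_2,J_3,J_4,J_5,J_6,J_7,J_9,J_{10}$ all coincide, hence so does $K_6$, while $J_8$ does not. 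In both cases the five vanishing components make the four odd invariants vanish identically, so these pairs do the job.

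For $K_6$ itself, the tensors $\D_1$ and $\D_3$ tabulated in the proof of Theorem~\ref{t5.1} suffice: they share $J_2=8$ and $J_3=J_5=J_6=J_7=J_8=J_9=J_{10}=0$, but $J_4(\D_1)=32\neq 22=J_4(\D_3)$, so substitution into the formula above gives $K_6(\D_1)=128\neq 62=K_6(\D_3)$. All remaining mixed-basis invariants agree, so $K_6$ is not a function of $J_2,J_3,J_5,J_6,J_7,J_8,J_9,J_{10}$.

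The one genuinely new case, which I expect to be the main obstacle, is $J_2$: the pair $\D_1,\D_2$ that handled $J_2$ for the Smith and Bao basis is now distinguished by $K_6$, so a fresh example is required. Here I would repeat the divide-and-conquer tactic used for $J_6$ in Theorem~\ref{t5.1}: impose $D_{1111}=D_{1112}=D_{1122}=D_{1222}=D_{2222}=0$, so that $J_3,J_5,J_7,J_9$ vanish identically and only the four parameters $D_{1113},D_{1123},D_{1223},D_{2223}$ remain; normalise one of them, say $D_{1123}=1$; and, with $\hat{\D}$ differing from $\D$ in a single remaining component, solve with a computer algebra system the four equations $J_6(\D)=J_6(\hat{\D})$, $K_6(\D)=K_6(\hat{\D})$, $J_8(\D)=J_8(\hat{\D})$, $J_{10}(\D)=J_{10}(\hat{\D})$ subject to $J_2(\D)\neq J_2(\hat{\D})$. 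Using $J_3\equiv 0$, the equation for $K_6$ can be rewritten in terms of $J_2,J_4,J_6$ alone, which trims the system a little. As in the $J_6$ computation, I anticipate that the relevant real solution will be pinned down implicitly by a high-degree univariate polynomial whose needed root is isolated in a short interval by a sign change, after which one checks exactly or numerically that the two tensors do carry different values of $J_2$. The difficulty is entirely computational --- controlling the size of this elimination and certifying that a solution with $J_2(\D)\neq J_2(\hat{\D})$ exists in the admissible range --- and once this is done, Lemma~\ref{l6.1} follows by combining the four cases with Theorem~\ref{t4.2}.
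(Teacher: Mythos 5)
Your treatment of $K_6$, $J_8$ and $J_{10}$ is correct but takes a genuinely different route from the paper. The paper disposes of all three by contradiction: if any of them were a function of the remaining eight mixed-basis invariants, then (using the relation $K_6=-\frac{13}{80}J_2^3+\frac{33}{40}J_2J_4-\frac{1}{24}J_3^2+\frac{9}{16}J_6$ to trade $K_6$ back for $J_4$) one would obtain an eight-element functional basis sitting inside the Smith--Bao basis, contradicting Theorem \ref{t5.1}. You instead exhibit explicit separating pairs: the pairs already built in the proof of Theorem \ref{t5.1} for $J_8$ and $J_{10}$ (which, agreeing on $J_2,J_3,J_4,J_6$, automatically agree on $K_6$), and the pair $\D_1,\D_3$ for $K_6$ itself, where indeed $K_6(\D_1)=128\neq 62=K_6(\D_3)$ while all other mixed-basis invariants coincide. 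Both arguments are sound; yours is more constructive, the paper's is shorter and avoids re-examining the examples.

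The gap is in the $J_2$ case. You correctly observe that the old pair $\D_1,\D_2$ no longer works because it is separated by $K_6$, but you then propose a heavy, uncarried-out elimination: a four-equation system to be solved by a computer algebra system, with a high-degree univariate polynomial whose root must be isolated and a final inequality $J_2(\D)\neq J_2(\hat{\D})$ still to be certified. As written this is a plan, not a proof --- nothing guarantees that the system has a real solution in the admissible range with the required inequality, and you do not produce one. The paper's fix is a one-line scaling argument you missed: on the tensors $(\lambda,0,\dots,0)$ and $(0,\mu,0,\dots,0)$ every invariant except $J_2$, $J_4$ and $K_6$ vanishes, and these three are homogeneous of degrees two, four and six in the scale factor; choosing the scales so that the two values of $K_6$ coincide (the paper takes $(\sqrt{2},0,\dots,0)$ and $(0,2/\sqrt[6]{31},0,\dots,0)$, both with $K_6=1024$) then forces $J_2$ to differ, precisely because the unscaled values $128$ and $62$ of $K_6$ are unequal. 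Replacing your computational sketch for $J_2$ by this scaling argument closes the gap and completes the lemma.
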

\begin{proof}   We consider two tensors, whose independent elements and associated values of invariants
are listed as
$$
\begin{tabular}{r|ccccccccc}
$(D_{1111}, \ldots, D_{2223})$            & $J_2$ & $J_3$ & $J_5$ &  $J_6$  & $K_6$  & $J_7$  & $J_8$  & $J_9$  & $J_{10}$ \\ \hline
$\D_1 :(\sqrt{2}, 0, 0, 0, 0, 0, 0, 0, 0)$        & 16 & 0 &0 &0 &1024 &0 &0 &0 &0    \\
$\D_2 :(0,\frac{2}{\sqrt[6]{31}}, 0, 0, 0, 0, 0, 0, 0)$ & $\frac{64}{\sqrt[3]{31}}$ &0 &0 &0 &1024 &0 &0 &0 &0
\end{tabular}
$$
Since $J_2(\D_1) \neq J_2(\D_2)$ and the corresponding values of $J_3, J_5, J_6, K_6, J_7, J_8, J_9$ and $J_{10}$ in $\D_1$
and $\D_2$ are equal, we conclude that $J_2$ is not a function of $J_3, J_5, J_6, K_6, J_7, J_8, J_9$ and $J_{10}$.

If $K_6$ is a function of the other right invariants, then the eight invariant set $$\{J_2,J_3,J_5,J_6,J_7,J_8,J_9,J_{10}\}$$ is a functional basis of $\D$.  This contradicts
Theorem \ref{t5.1}.
If $J_8$ is a function of the other eight invariants, then the eight invariant set $\{J_2,J_3,J_5,J_6,K_6,J_7,J_9,J_{10}\}$ is a functional basis of $\D$.  By the proof of Theorem \ref{t3.1}, $K_6$ is a linear combination of $J_2^3$, $J_2J_4$, $J_3^2$ and $J_6$.  This implies that the eight invariant set $\{J_2,J_3,J_4,J_5,J_6,J_7,J_9,J_{10}\}$ is also a functional basis of $\D$, and thus contradicts Theorem \ref{t5.1}.
If $J_{10}$ is a function of the other eight invariants, then the eight invariant set $\{J_2,J_3,J_5,J_6,K_6,J_7,J_8,J_9\}$ is a functional basis of $\D$.  Since $K_6$ is a linear combination of $J_2^3$, $J_2J_4$, $J_3^2$ and $J_6$, this implies that the eight invariant set $\{J_2,J_3,J_4,J_5,J_6,J_7,J_8,J_9\}$ is also a functional basis of $\D$, and thus contradicts Theorem \ref{t5.1}.
\end{proof}

Based on these results, we may prove the following theorem.
\begin{Theorem} \label{t6.2}
 The mixed functional basis $\{J_2,J_3,J_5,J_6,K_6,J_7,J_8,J_9,J_{10}\}$ is an irreducible functional basis of the fourth order three-dimensional symmetric and traceless tensor $\D$.
\end{Theorem}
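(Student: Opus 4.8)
By Theorem \ref{t4.2}, none of $J_3, J_5, J_7, J_9$ is a function of the other three odd-degree invariants together with $J_2, J_4, J_6, K_6, J_8, J_{10}$; a fortiori, since $J_4$ does not belong to the mixed functional basis, none of $J_3, J_5, J_7, J_9$ is a function of the other eight invariants of that basis. By Lemma \ref{l6.1}, none of $J_2, K_6, J_8, J_{10}$ is a function of the other eight invariants of the mixed functional basis. Hence it remains only to prove that $J_6$ is not a function of $\{J_2, J_3, J_5, K_6, J_7, J_8, J_9, J_{10}\}$. (One cannot reduce this to Theorem \ref{t5.1} by the short argument used in Lemma \ref{l6.1}: adjoining $J_4$ to $\{J_2, J_3, J_5, K_6, J_7, J_8, J_9, J_{10}\}$ and then trying to discard $K_6$ fails, because $K_6$ is a function of $J_2, J_3, J_4, J_6$ and the invariant $J_6$ has been removed.) I would first record that proving $J_6$ is not a function of these eight invariants is equivalent to proving that $J_4$ is not: from the identity $K_6 = -\frac{13}{80}J_2^3 + \frac{33}{40}J_2J_4 - \frac{1}{24}J_3^2 + \frac{9}{16}J_6$ of the proof of Theorem \ref{t3.1} we see that $J_6$ is a polynomial in $J_2, J_3, J_4, K_6$, while conversely $J_4$ is a function of $J_2, J_3, J_6, K_6$ (solve for $J_2 J_4$, divide by $J_2$ when $J_2 \neq 0$, and use $J_4 = 0$ when $J_2 = 0$ since then $\D = \0$). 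So it suffices to produce a single pair of tensors that share the values of all eight of these invariants while having different values of $J_6$.

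The plan is to reuse the search-space reduction from the proof of Theorem \ref{t5.1}. I would restrict to tensors $\D$ with $D_{1111} = D_{1112} = D_{1122} = D_{1222} = D_{2222} = 0$, on which $J_3 \equiv J_5 \equiv J_7 \equiv J_9 \equiv 0$, leaving only $D_{1113}, D_{1123}, D_{1223}, D_{2223}$ free, and normalize $D_{1123} = 1$ by homogeneity. Since $J_3 = 0$ on this slice, $K_6$ is a function of $J_2, J_4, J_6$ alone, and for two such tensors of equal $J_2$ the equality $K_6(\D) = K_6(\hat{\D})$ forces the changes in $J_4$ and $J_6$ to be proportional, so making $J_6$ change automatically makes $J_4$ change by a compensating amount. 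I would then search for $\D$ and $\hat{\D}$ agreeing in all free components except $D_{1223} \neq \hat{D}_{1223}$ and satisfying $J_2(\D) = J_2(\hat{\D})$, $K_6(\D) = K_6(\hat{\D})$, $J_8(\D) = J_8(\hat{\D})$ and $J_{10}(\D) = J_{10}(\hat{\D})$. Using a computer algebra system to solve three of the unknowns in terms of one remaining parameter $s$ (essentially $s = D_{1113}^2$), exactly as was done for $J_6$ in the proof of Theorem \ref{t5.1}, reduces the problem to a single polynomial equation $p(s) = 0$; exhibiting a sign change of $p$ on a small interval yields a real root $s^*$ by the intermediate value theorem, and at the corresponding tensors one verifies by direct substitution that $J_2, J_3, J_5, K_6, J_7, J_8, J_9, J_{10}$ all coincide while $J_6(\D) \neq J_6(\hat{\D})$. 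This proves that $J_6$ is not a function of the other eight invariants of the mixed functional basis, and together with the first paragraph completes the proof.

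The main obstacle is the elimination step: as the analogous computation for $J_6$ in the proof of Theorem \ref{t5.1} already shows, the polynomials that arise are of high degree with very large integer coefficients, so the computation must be organized carefully so that the requirement "$J_6$ varies while $J_2, K_6, J_8, J_{10}$ are held fixed" is both genuinely imposed and genuinely satisfiable. In particular one must check that the tensor built from $s^*$ is nonzero (so that $J_2 \neq 0$ and the division used in the first paragraph is legitimate), that $s^*$ is a real root, and that it yields $\hat{D}_{1223} \neq D_{1223}$ so that the two tensors are actually distinct and $J_6$ genuinely changes. Everything else is routine bookkeeping of the kind already carried out for the Smith and Bao basis.
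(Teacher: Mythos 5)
Your proposal follows the paper's proof of Theorem \ref{t6.2} essentially verbatim: Theorem \ref{t4.2} disposes of the odd-degree invariants, Lemma \ref{l6.1} disposes of $J_2, K_6, J_8, J_{10}$, and the remaining claim about $J_6$ is handled on the same slice $D_{1111}=D_{1112}=D_{1122}=D_{1222}=D_{2222}=0$ with $D_{1123}=1$, varying only $D_{1223}$ so as to equate $J_2, K_6, J_8, J_{10}$ while changing $J_6$, the witness being found by a computer algebra elimination exactly as in the paper (which exhibits the parameter $r\in(0.15,0.18)$ as the unique root of a degree-$35$ polynomial). The only content you omit is the explicit numerical witness itself, and your side observations (the equivalence with excluding $J_4$, and the proportionality of the changes in $J_4$ and $J_6$ forced by fixing $J_2$ and $K_6$ on the slice where $J_3=0$) are correct.
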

\begin{proof}
By Theorem \ref{t4.2}, each of $J_3, J_5, J_7$ and $J_9$ is not a function of the other eight invariants.   By Lemma \ref{l6.1}, each of $J_2, K_6, J_8$ and $J_{10}$ is not a function of the other eight invariants. We only need to show that $J_6$ is not a function of $J_2, J_3, J_5, K_6, J_7, J_8, J_9, J_{10}$.
We adopt a similar tactic used in the proof of Theorem \ref{t5.1}, i.e., we try to find a solution of homogenous system of equations
$$ \left\{
\begin{array}{rcl}
J_2(D_{1123}, D_{1113}, D_{1223}, D_{2223}) &=&J_2(D_{1123}, D_{1113}, \hat{D}_{1223}, D_{2223}) \\
K_6(D_{1123}, D_{1113}, D_{1223}, D_{2223}) &=&K_6(D_{1123}, D_{1113}, \hat{D}_{1223}, D_{2223}) \\
J_8(D_{1123}, D_{1113}, D_{1223}, D_{2223}) &=&J_8(D_{1123}, D_{1113}, \hat{D}_{1223}, D_{2223}) \\
J_{10}(D_{1123}, D_{1113}, D_{1223}, D_{2223}) &=&J_{10}(D_{1123}, D_{1113}, \hat{D}_{1223}, D_{2223})
\end{array}
\right.
$$
such that $J_6(D_{1123}, D_{1113}, D_{1223}, D_{2223}) \neq J_6(D_{1123}, D_{1113}, \hat{D}_{1223}, D_{2223})$.
Note that $$J_3(\D)=J_5(\D)=J_7(\D)=J_9(\D)=0$$
if
$D_{1111}=D_{1112}= D_{1122}=D_{1222}=D_{2222}=0$.
Without loss of generality, we set $D_{1123}=1$. Using Wolfram Mathematica 11, we get an exact solution in the form of
$$
D_{1123}=1, \quad D_{1113} =-\sqrt{r}, $$
$$
D_{1223}= -\frac{1}{4} + \frac{1}{c_1 } \sqrt{a_0 +  a_1 r +  a_2 r^2 + \dots +  a_{34} r^{34} },
$$
$$
D_{2223}= -\frac{1}{c_2} \sqrt{r (b_0 +  b_1 r +  b_2 r^2 + \dots +  a_{34} r^{34}   )},
$$
$$
\hat{D}_{1223}= -\frac{1}{4} - \frac{1}{c_1} \sqrt{a_0 +  a_1 r +  a_2 r^2 + \dots +  a_{34} r^{34} },
$$
where $r \in (0.15, 0.18)$ is the unique solution of the polynomial equation
$$ -64109784657400987065159603764246576634003456000000000000$$ $$- 5369129429885302460742809297648749887571026247680000000000 x$$ $$-
 153201008141827442881381394920650784949762186471014400000000 x^2$$ $$ - 1656889511546590825873682312283824523248748098888269824000000 x^3$$ $$ +
 7369444722014488854141015192689977754200419787516235612160000 x^4$$ $$ + 344687045547898293313166391701750328724958377970996437975040000 x^5$$ $$ + 2192110181152338457993523061441529443647732547671602096373760000 x^6$$ $$ - 14968963396575624199399135040531244877145468364447279535325184000 x^7 $$ $$-
 234805849802658679752240863591832681103108824951597212530860032000 x^8 $$ $$- 422337927859225356000861814498346706911669673537592700848619520000 x^9$$ $$ +
 7028634526826762138513975906082250674889987023264347324441821593600 x^{10}$$ $$ + 36969946474677561620371874762436081248575723446112350394521421209600 x^{11}$$ $$ -
 42442848536296635464629775757996628679093460335317472356029802675200 x^{12}$$ $$ - 716662236647463544412755864040931894538401570933257995384451172234240 x^{13}$$ $$ -
 741853994396767828968888171006412749816485023762218636476738145861120 x^{14}$$ $$ + 7204018628316195277395849870383786198537765387905103672824498339991040 x^{15}$$ $$ +
 17348659913622400445975692756963677245348807841098400455241408150652416 x^{16}$$ $$ - 38182294348838453520904561075935755825444604439082277602081162833113216 x^{17}$$ $$ - 162620492217718142057174966825475718413039914424528022118121770126954832 x^{18}$$ $$ + 55889448851933401132472904264447517325518284609316690018472113411805952 x^{19}$$ $$ + 797081255784718719287897679123283874803228843631436387713517860839048467 x^{20}$$ $$ + 337528758551396077067549864406800157283411214325783486527445170676815643 x^{21}$$ $$ -
 2183993467513535801976109125792928890632360395728974728671707707490772642 x^{22}$$ $$ -
 1377517934978863146865638098333299244612192410198619044666080922573657842 x^{23}$$ $$ +
 4562544289015659637335110106732281009311625087430102047421541548985072359 x^{24}$$ $$ +
 3010505914931180973189105423110988828131288534529113892496394116850347815 x^{25}$$ $$ -
 6216715206473873649307593424153592668665711420866822835179265312439007400 x^{26}$$ $$ -
 2578957380026683259814608941321855563802049904499923250681850105712630000 x^{27}$$ $$ +
 5552422612712111836587440567052714712334011094843262604618429510827000000 x^{28}$$ $$ -
 526609414586276159483796972711762412239721424217621488944174883900000000 x^{29}$$ $$ -
 2377245057318921575984682390841590664129326443658943244650239690000000000 x^{30}$$ $$ +
 1562359154759369959310088265288232403322280944017751063536291750000000000 x^{31}$$ $$ -
 715784362177775248604714930110261684669270224278974153421200000000000000 x^{32}$$ $$ +
 134960627886741137293356075410432151464506032178953496560000000000000000 x^{33}$$ $$ -
 5205546624261647555731692508988955194694566406781056000000000000000000 x^{34}$$ $$ +
 120374337563506833410602387633027953028691383552000000000000000000000 x^{35}=0.$$
Except that $D_{1123}=1$, approximate digit values of $D_{1113}, D_{1223}, D_{2223}, \hat{D}_{1223}$ are as follows:
$$
D_{1113}=-0.405381, \  D_{1223}=0.67075, \  D_{2223}=1.12345  \text{  and }  \hat{D}_{1223}=-1.17075.
$$
Since $J_6(D_{1123}, D_{1113}, D_{1223}, D_{2223}) \neq J_6(D_{1123}, D_{1113}, \hat{D}_{1223}, D_{2223})$, it follows that $J_6$ is not a function of $J_2, J_3, J_5, K_6, J_7, J_8, J_9, J_{10}$.

Combining these results, we conclude that the mixed functional basis $$\{J_2,J_3,J_5,J_6,K_6,J_7,J_8,J_9,J_{10}\}$$ is an irreducible functional basis of $\D$.
\end{proof}

Apparently, there are many other irreducible functional bases which consist of polynomial invariants.  It is difficult to identify all of them.


\bigskip

\end{document}